\newtheorem{theorem}{Theorem}
\newtheorem{lemma}{Lemma}
\author{
Ben Bals$^{1,2}$
\and
Michelle Döring$^{3}$\and
Nicolas Klodt$^{3}$\And
George Skretas$^3$\\
\affiliations
$^1$Centrum Wiskunde \& Informatica, Amsterdam\\
$^2$Vrije Universiteit Amsterdam\\
$^3$Hasso Plattner Institute, Potsdam\\
\emails
bjjb@cwi.nl,
\{michelle.doering, nicolas.klodt, georgios.skretas\}@hpi.de,
}
\title{Catch Me If You Can: Finding the Source of Infections in Temporal Networks} 
\else\DeclareMathOperator{\N}{\mathbb{N}}\fi
\else\DeclareMathOperator{\R}{\mathbb{R}}\fi
\newcommand{\BigO}{\mathcal{O}}
\newcommand{\iphase}{\delta}
\newcommand{\nodeset}[1]{V(#1)}
\newcommand{\edgeset}[1]{E(#1)}
\newcommand{\setsize}[1]{\left|#1\right|}
\newcommand{\nodesetsize}[1]{\setsize{V(#1)}}
\newcommand{\Tmax}{T_\mathrm{max}}
\DeclareMathOperator{\edgeLabelOp}{\lambda}
\newcommand{\edgeLabel}[1]{\edgeLabelOp(#1)}
\newcommand{\G}{\mathcal{G}}
\newcommand{\tw}{\mathrm{tw}}
\newcommand{\funnyconstant}{2+\varepsilon}
\newcommand{\EOp}{\mathbb{E}}
\newcommand{\E}[1]{\EOp\left[#1\right]}
\newcommand{\ProbOp}{\mathbb{P}}
\newcommand{\Prob}[1]{\ProbOp\left[#1\right]}
\DeclareMathOperator{\modOp}{mod}
\newcommand{\recheckNico}[1]{\todo[color=green]{@Niko, recheck: #1}}
\newif\ifpaper
\definecolor[named]{benGray}{rgb}{0.61,0.61,0.63}
\setlist[itemize]{noitemsep, nolistsep}
\crefname{hypothesis}{Hypothesis}{Hypotheses}
\crefname{conjecture}{Conjecture}{Conjectures}
\begin{document}

\maketitle
\begin{abstract}
Source detection (SD) is the task of finding the origin of a spreading process in a network.
Algorithms for SD help us combat diseases, misinformation, pollution, and more, and have been studied by physicians, physicists, sociologists, and computer scientists.
The field has received considerable attention and been analyzed in many settings (e.g.,~under different models of spreading processes), yet all previous work shares the same assumption that the network the spreading process takes place in has the same structure at every point in time.
For example, if we consider how a disease spreads through a population, it is unrealistic to assume that two people can either never or at every time infect each other, rather such an infection is possible precisely when they meet.
Therefore, we propose an extended model of SD based on temporal graphs, where each link between two nodes is only present at some time step.
Temporal graphs have become a standard model of time-varying graphs, and, recently, researchers have begun to study infection problems (such as influence maximization) on temporal graphs \cite{influencers,gayraud-evolving-social-networks}.
We give the first formalization of SD on temporal graphs.
For this, we employ the standard SIR model of spreading processes \cite{Hethcote1989}.
We give both lower bounds and algorithms for the SD problem in a number of different settings, such as with consistent or dynamic source behavior and on general graphs as well as on trees.
\end{abstract}

\section{Introduction}
\label{intro}

Information, diseases, pollutants—all these things spread through the networks that define how everything from people and servers to our streets and sewers are connected. The field of spreading processes has received substantial attention from researchers as diverse as physicians, physicists, sociologists, and computer scientists. 
One of the central questions that naturally arises is to find the source of some spreading process given only limited information (e.g.,~from a small set of disease infection times or from a number of sensors in a sewer network) \cite{sir-source-detection,shah2011rumors,sensor-placement,paluch_fast_2018}.
The current mainstream models in this area, such as the independent cascade model \cite{independent-cascade}, or the susceptible-infected-resistant (SIR) model \cite{Hethcote1989,a-contribution,shah2011rumors} incorporate the inherent temporal behavior of the spreading process, that is, infections themselves take time such that a node first gets infected and only then is able to infect other nodes. However, they fail to capture temporal dependencies in the underlying network itself. In many real-world networks the connections change over time (e.g.,~social networks in which diseases or information spread are highly temporal as the link between two people is only able to transmit information or diseases at specific points in time) and these changes heavily impact the behavior of spreading processes. For this reason, the study of infections in temporal graphs has been initiated. In 2015,
Gayraud et al.~\cite{gayraud-evolving-social-networks}, generalized the independent model from static to temporal graphs. Since then, other papers have studied problems related to spreading processes in temporal networks \cite{influencers}, but, to the best of our knowledge, no one has conducted research on the problem of \emph{source detection} (SD). We provide the first rigorous definition of SD in temporal graphs and give both algorithms and lower bounds for this problem in a variety of settings.\footnotemark%
\footnotetext{This work is based on the first author's master thesis, which is available at \url{https://arxiv.org/abs/2503.13567}.}

\subsection{Our Contribution}
\label{contribution}

\begin{table*}[t]
\caption{\label{tbl:results}An overview of the price of detection in different settings. For lower bounds, a \emph{wcp} annotation signifies that the bound holds for all Discoverer algorithms winning the game with constant probability. For upper bounds, \emph{wcp} signifies that a Discoverer algorithm winning the game with constant probability and the noted price of detection exists. Upper bounds marked \emph{det} are achieved via a deterministic algorithm. Results marked \textsuperscript{a} are transferred between known and unknown settings (i.e.,~lower bounds from known to unknown and upper bounds in the other direction). Similarly, results transferred between the consistent and obliviously dynamic settings are marked \textsuperscript{b} and transfers between trees and general graphs are marked \textsuperscript{c}. The result marked * only holds when the Discoverer is allowed to watch two nodes.}
\centering
\begin{tabular}{lllll}
\toprule
& \multicolumn{2}{c}{\textbf{Trees}}  & \multicolumn{2}{c}{\textbf{General}} \\[0pt]
\cmidrule(r){2-3} \cmidrule(l){4-5}
& Lower Bound & Upper Bound & Lower Bound & Upper Bound\\[0pt]
\midrule
\textbf{Consistent} &  &  &  & \\[0pt]
Known & \(\Omega(n \log n)\) wcp & \(\BigO(n \log n)\) wcp & \(\Omega(n \sqrt {n})\) wcp &  \textcolor{benGray}{\(\BigO(n \sqrt{n})\) wcp\textsuperscript{a}}\\[0pt]
& \Cref{thm:aX1-lb-tree} & \Cref{cor:aX1-tree} &  \textcolor{benGray}{\Cref{thm:consistent-known-to-unknown}} & \\[0pt]
Unknown & \(\Omega(n \sqrt n)\) wcp &  \textcolor{benGray}{\(\BigO(n \sqrt n)\) wcp\textsuperscript{c}} &  \textcolor{benGray}{\(\Omega(n \sqrt n)\) wcp\textsuperscript{c}} & \(\BigO(n \sqrt {n})\) wcp\\[0pt]
& \Cref{thm:aY1-lb} &  &  & \Cref{thm:aY1}\\[0pt]
\midrule
\textbf{Obliviously dynamic} &  &  &  & \\[0pt]
Known &  \textcolor{benGray}{\(\Omega(n \log n)\) wcp\textsuperscript{b}} & \(\BigO(n \log n)\) wcp* & \(\Omega(n^2)\) wcp &  \textcolor{benGray}{\(n^2\) det\textsuperscript{a}}\\[0pt]
&  & \Cref{thm:dX2} & \Cref{thm:dX1-lb} & \\[0pt]
Unknown &  &  &  \textcolor{benGray}{\(\Omega(n^2)\) wcp\textsuperscript{a}} & \(n^2\) det\\[0pt]
&  &  &  & \Cref{thm:dY1}\\[0pt]
\bottomrule
\end{tabular}
\end{table*}

Our contribution is twofold: (i) we define the SD problem as a round-based, interactive two-player game, (ii) we provide algorithms and lower bounds for different parameters of the SD game, and we provide theoretical proofs for all of our claims. See \Cref{tbl:results} for an overview of our results. 
\noindent {\emph{Statements where proofs or details are omitted due to
space constraints are marked with $\star$. An appendix
containing all proofs and details is provided as supplementary material.}}

In \Cref{sec:game-def}, we formalize the SD problem as a round-based game in which two players interact with each other.
The \emph{Adversary} decides the structure of the underlying temporal network, as well as which node is the source and when it starts an infection chain.
The \emph{Discoverer} initially has no information about the location, number, or label of the edges.
In each round, the Discoverer may watch a single node and is informed if, when, and by which neighbor this watched node is infected.
Note that the Adversary does not know which nodes the Discoverer watches.
Counting the rounds of this game until the Discoverer has found the source is an insufficient cost metric, as then the Adversary is incentivized to minimize infections in order to evade detection.
Instead, we study the number of infections until detection, which we call the \emph{price of detection}.
This cost measure combines an incentive to avoid detection (to prolong the period in which the Adversary may perform infections) and an incentive to infect as many nodes as possible in a single round.
Notice, that if \(n\) is the number of nodes in the graph, any Discoverer algorithm must always tolerate \(\Omega(n^2)\) infections in the worst case, and there is an algorithm that always wins the game within \(n^2\) infections (see \Cref{thm:dY1}).

Because in this game, the worst-case performance of any algorithm is trivially bad \todo{Is this true? Half of our results are on trees. Is there a trivial reason why you cannot do it faster on trees?} (i.e.,~its price of detection is in \(\Omega(n^2)\)), we explore the power of randomized techniques to overcome this limitation in \Cref{sec:rand}.
In particular, we give a Discoverer algorithm that wins the SD game with constant probability (i.e.,~a probability that does not decrease for larger networks) while only tolerating \(\BigO(n \sqrt n)\) infections.
We also prove that this price of detection is asymptotically optimal for any algorithm that wins the game with a constant probability.

In \Cref{sec:known}, we explore how the problem changes if the Discoverer has knowledge about the underlying static graph (but not about when an edge exists in the temporal graph). Surprisingly, we prove that this does not lead to a decreased price of detection. In particular, we show that the lower bounds from the unknown setting transfer to the known one. Despite this, we can show positive results on structured instances. We give an algorithm that, for graphs with bounded treewidth \(\tw\), wins the game on known static graphs with constant probability with a price of detection of \(\BigO(\tw \cdot n \log n)\).
This directly translates to a \(\BigO(n \log n)\) algorithm on trees, which we discuss in \Cref{sec:trees}.

In \Cref{sec:dyn}, we study what happens when we allow the Adversary to change the time-step at which it infects the source between each round. 
Note that we still assume the Adversary does not know which nodes the Discoverer chooses to watch. That is, while the source behavior may be dynamic, it may not depend on the actions of the Discoverer, thus, we call this model \emph{obliviously dynamic}.
We first show that any Discoverer algorithm that wins the game with constant probability must have a price of detection in \(\Omega(n^2)\).
On a more positive note, we show that if we strengthen the Discoverer slightly by allowing them to watch two nodes each round, we are still able to achieve a price of detection of \(O(n \log n)\) with constant probability on trees (if the static graph is known).
We finish off this line of inquiry by giving a proof that allowing the Discoverer to watch \(k\) nodes may only decrease the price of detection by at most a factor of \(k\) if the source behavior is consistent, thus this generalization may only be asymptotically different in the obliviously dynamic setting.


\subsection{Related Work}
\label{sec:org42bc633}
The SIR model and related models are standard in mathematical biology \cite{Hethcote1989} and have been extensively studied  from a statistical perspective \cite{BRITTON201024}.
The SIR model most closely models how viral infections spread through populations.
For example, adaptations of the model have been used to study COVID-19 \cite{KUDRYASHOV2021466,COOPER2020110057,covid-sir}.

Recently, researchers have started to ask more algorithmic questions about infection models, aiming to answer questions such as SD in addition to the classical task of predicting spread.
In particular, \cite{independent-cascade}
set off this new wave of algorithmic study of infections.
They introduce and study the \emph{influence maximization problem} (though under the independent cascade model instead of SIR), which already incorporates the inherently temporal nature of spreading infections while the underlying graph is modeled as static.
Their work finds broad resonance with applications as diverse as the study of misinformation \cite{budak-limiting,fake-news-survey}, infectious diseases \cite{covid-sir,computer-immunology}, and viral marketing \cite{marketing-im-billion,10.1145/2896377.2901462}.

The SD problem has a rich literature and has even been studied on the SIR model we employ.
In 2011, \cite{shah2011rumors} study the SD problem under the SIR model.
In 2016, \cite{sir-source-detection} extend their work.
Other notable algorithmic contributions include the PVTA algorithm proposed by \cite{pinto2012locating},
which focuses on estimating the source of an infection from observations on a sparse set of nodes.
This model has also recently been extended to SD \cite{pmlr-v216-berenbrink23a}.
In 2011, \cite{shah2011rumors} study the SD problem under the SI model.
\cite{sir-source-detection} extend this work to the SIR model.\todo{Those exact two papers have been mentioned before with the same text except for that SI was SIR before}
\cite{gayraud-evolving-social-networks}, extend the study of influence maximization on both the independent cascade and linear threshold models to temporal graphs (which were popularized by \cite{kempe2000connectivity}).
\cite{influencers}
study a number of variations of the influence maximization problem on temporal graphs under the SIS model (which is closely related to SIR).
\todo[color=pink]{@Supervisors: Discuss temporal graphs themselves in the related work section?}

\section{Preliminaries}
\label{sec:org6a8d1a2}
We write \(2\N\) for the set of even natural numbers and \(2\N + 1\) for the set of odd natural numbers.
Two real-valued functions \(f, g \colon \N \to \R\) are asymptotically equivalent if \((f(n) / g(n)) \xrightarrow{n\to \infty} 1\). We use the standard definition of \emph{treewidth} as presented in \cite{cygan2015parameterized}.

We define a \emph{temporal graph} \(\G = (V, E, \edgeLabelOp)\) with lifetime \(\Tmax\) as an undirected graph \((V, E)\) together with a \emph{labeling function} \(\edgeLabelOp \colon E \to 2^{[\Tmax]}\).
We interpret this as \(e \in E\) being present precisely at time steps \(\edgeLabel{e}\).
We restrict ourselves to \emph{simple} temporal graphs where each edge has exactly one label.
Abusing notation, we therefore also use \(\edgeLabelOp\) as if it were defined as \(\edgeLabelOp\colon E \to [\Tmax]\).
We also write \(\nodeset{\G}\) for the nodes of \(\G\), and \(\edgeset{\G}\) for its edges.
Every temporal graph has an \emph{(underlying) static graph} \(G=(V, E)\) obtained by simply forgetting the edge labels.
A sequence of nodes \(v_1, \dots, v_\ell \in V\) is a \emph{temporal path} if it forms a path in $G$ and the labels along the edges are strictly increasing (i.\,e.,~for all \(i \in [\ell -2]\), we have \(\edgeLabel{v_i v_{i+1}} < \edgeLabel{v_{i+1} v_{i+2}}\)).

We use the susceptible-infected-resistant (SIR) model, in which a node is either in a \emph{susceptible}, \emph{infected}, or \emph{resistant} state. 
This model of temporal infection behavior is based on \cite{influencers} (and more historically flows from \cite{a-contribution} and \cite{pastor2015epidemic}).
An \emph{infection chain} in the SIR model unfolds as follows.
At most \(k \in \N\) nodes may be infected by the Discoverer at arbitrary points in time, which we call \emph{seed infections} denoted as \(S \subseteq V \times [0,\Tmax]\). 
Otherwise, a node \(u\) becomes \emph{infected at time step \(t\)} if and only if it is susceptible and there is a node \(v\) infectious at time step \(t\) with an edge \(uv\) with label \(t\).
Then \(u\) is infectious from time \(t+1\) until \(t+\iphase{}\), after which \(u\) becomes resistant.
Note that if a susceptible node has two or more infected neighbors at the same time,  it can be infected by any one of them, but only one.
Thus, a given set of seed infections may result in multiple possible infection chains.

\section{The Source Detection Game} \label{sec:game-def}
\label{game-def}
In our game, two players interact with each other. The \emph{Discoverer} attempts to find a fixed source of infections (e.g.,~a bot spreading misinformation) and the \emph{Adversary} controls the environment as well as the position of that source and attempts to conceal the source from the Discoverer.
The goal of the Discoverer is to find the source while minimizing the number of successful infections until they find the source.
We refer to this number of infections as the \emph{price of detection}.
See \Cref{game:source-detection} for a formalization of the game.


\todo[color=pink]{Fix that the Discoverer learns the infection source}

\begin{figure}[tbph]
\includegraphics[width=\linewidth]{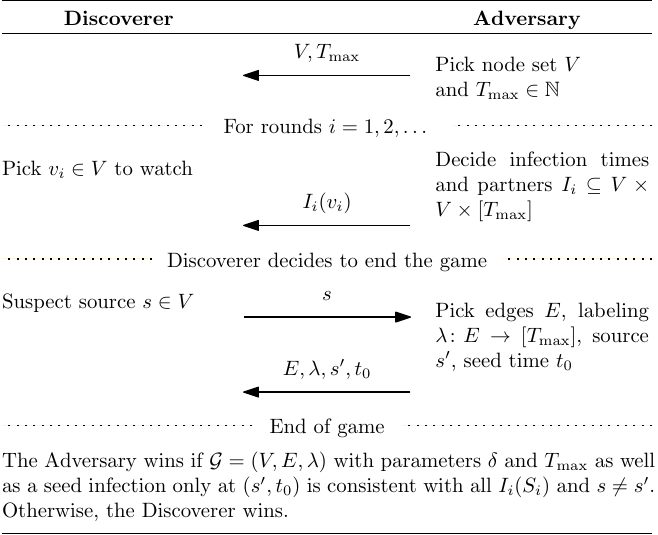}
\caption{The SD game in the variation with consistent source behavior and known static graph. \label{game:source-detection}}
\end{figure}

We investigate different restrictions placed upon the knowledge and power of the Discoverer and Adversary.
The explored dimensions that impact on the price of detection are:
\begin{itemize}
\item \emph{Infection behavior}: Does the source have to infect in the same fashion each round, or can it change its behavior between rounds? In any case, the Adversary is oblivious to which nodes the Discoverer watches.
\item \emph{Watched nodes}: How many nodes may the Discoverer watch in each round? We investigate the cases where the Discoverer watches one or two nodes.
\item \emph{Discoverer knowledge}: Which information about the graph is available to the Discoverer? We investigate the case where the Discoverer knows the underlying static graph and the case where it only knows the nodes of the graph but not the edges.
\item \emph{Graph class}: How does the structure of the underlying static graph affects the price of detection? We look at general graphs, trees, and graphs of bounded treewidth.
\end{itemize}



\section{Randomized Source Detection} \label{sec:rand}
\label{rand}
Randomizing the behavior of the Discoverer enables us to evade the worst cases and yield good results with high probability.\todo[color=pink]{@Supervisors: Should I include a very bad lower bound on worst-case performance to justify this? --- Yes I think that would be very helpful.}
In fact, in this section, we even show that there is a randomized Discoverer that wins the SD game with consistent source behavior on unknown static graphs within \(O(n \sqrt n)\) infections and with constant success probability.

First, we give a useful primitive to use in the construction of more complex algorithms.
We essentially prove that the source can, in expectation, not hide too many infections from the Discoverer.
Looking at this fact the other way round, we see that if there is a linear number of infections, a Discoverer employing this strategy likely learns of at least one infection.
Interestingly, this result even holds for all settings.

\begin{lemma}
Consider an instance of the SD game (either with known or unknown graph and either consistent or obliviously dynamic source behavior).
Then there is a strategy for the Discoverer such that, after termination of the strategy, the Discoverer has observed a node in a round in which it gets infected.
This strategy succeeds after tolerating at most \(3n\) infections in expectation.
\label{lem:linear-discovery}
\end{lemma}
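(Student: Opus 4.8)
The plan is to have the Discoverer use the simplest randomized probing strategy: in every round it watches one node drawn uniformly at random from \(V\) (independently across rounds, with replacement), and it stops the first time the watched node turns out to have been infected during that round. Since the Discoverer knows \(V\) (hence \(n\)) in all four variants of the game, and since the Adversary is oblivious to which nodes are watched, this is a legal strategy regardless of which setting we are in — which is why one proof covers all of them at once.

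For the analysis, condition on the Adversary's entire play \(\omega\) (it may be randomized, but independently of the Discoverer's coins, so this conditioning fixes all per-round infection chains). For round \(i\) let \(A_i \subseteq V\) be the set of nodes that get infected during round \(i\); the source is infected every round, so \(k_i := |A_i| \ge 1\), and the number of infections the Discoverer must tolerate in round \(i\) is at most \(k_i\). Given \(\omega\), the node watched in round \(i\) is a fresh uniform sample, so conditioned on not having stopped yet the strategy stops in round \(i\) with probability exactly \(k_i/n\); hence \(\Pr[\tau \ge i \mid \omega] = P_{i-1}\), where \(P_0 := 1\) and \(P_i := \prod_{j \le i}(1 - k_j/n)\). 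Because \(k_j \ge 1\) we get \(P_i \le (1 - 1/n)^i \to 0\), so the strategy terminates almost surely.

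The expected cost then follows from a one-line telescoping identity. Using \(k_i P_{i-1} = n\,P_{i-1}\bigl(1 - (1 - k_i/n)\bigr) = n\,(P_{i-1} - P_i)\), the expected number of tolerated infections, given \(\omega\), is at most
\[
\sum_{i \ge 1} k_i\, P_{i-1} \;=\; \sum_{i \ge 1} n\,(P_{i-1} - P_i) \;=\; n\bigl(P_0 - \lim_{i \to \infty} P_i\bigr) \;=\; n .
\]
Averaging over \(\omega\) gives an expected price of at most \(n \le 3n\). There is no single hard step here; the points that need care are (i) justifying that the per-round stopping probability is exactly \(k_i/n\) and independent of the past — this is precisely where obliviousness of the Adversary and the freshness of the Discoverer's randomness enter — and (ii) charging the source's seed infection as an infection the watched node can ``get'', so that \(k_i \ge 1\) and termination is almost sure. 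The comfortable gap between \(n\) and \(3n\) also absorbs any convention mismatch about counting seed infections, or a switch to sampling without replacement (where the per-round catch probability is \(|A_i \cap R_i|/|R_i|\) over the not-yet-watched set \(R_i\), which still contains the source whenever the game has not stopped).
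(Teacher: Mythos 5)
Your proposal is correct, and it uses the same strategy as the paper (watch a uniformly random node each round, stop upon observing an infection at the watched node), with the same setup of the expectation as \(\sum_{i} k_i \Pr[T \ge i] = \sum_i k_i \prod_{j<i}(1-k_j/n)\). Where you diverge is in how this sum is evaluated: the paper groups consecutive rounds into blocks whose success probabilities sum to at least \(1/2\) but at most \(3/2\), and then dominates the whole expression by a geometric process, arriving at the bound \(3n\). You instead notice that the summand telescopes exactly, \(k_i P_{i-1} = n\,(P_{i-1}-P_i)\), so the sum collapses to \(n\,(P_0 - \lim_i P_i) = n\) once you verify \(P_i \to 0\) (which your observation \(k_i \ge 1\), via the seed infection of the source, guarantees). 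Your route is shorter, avoids the union-bound/grouping machinery entirely, and yields the strictly stronger constant \(n\) in place of \(3n\); the paper's blocking argument buys nothing extra here beyond being perhaps more robust to perturbations of the model. You are also right to flag the two points that need care — that the per-round catch probability is exactly \(k_i/n\) because the Adversary is oblivious and the Discoverer's coins are fresh, and that the seed infection must be chargeable so that termination is almost sure — both of which the paper leaves implicit.
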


\begin{proof}
In each round, the strategy picks a node to watch uniformly at random.
It terminates when it has observed a node which gets infected while being watched.
Let \(T\) be the random variable that takes the number of the round in which this happens.
Let \(a_1, \dots\) be the nodes infected in the respective rounds.
Then the expected number of tolerated infections is
$\E{\sum_{i=1}^T a_i}
= \sum_{i = 1}^\infty a_i \cdot \Prob{i \le T}
= \sum_{i = 1}^\infty a_i \cdot \prod_{j=1}^{i-1} \left(\frac{n-a_j}{n}\right)$,
where the equalities follow from the alternative definition of the expectation and since we require at least \(i\) rounds iff the first \(i-1\), rounds are unsuccessful. Now, substitute \(p_i = a_i / n\) for all \(i\). Then \(p_i\) is the probability of finishing in round \(i\) assuming it is reached. We may also pull \(n\) outside the sum to view the number of tolerated infections in each round as a multiple of \(n\). This leaves us with $n \sum_{i = 1}^\infty p_i \cdot \prod_{j=1}^{i-1} (1-p_j)$. Now group the rounds of our strategy and thus the \(p_i\) by taking consecutive elements until their sum is at least \(1/2\) and then start the next group. Formally, let \(\{k_j\}_{j \in \N}\) be the (possibly finite) sequence such that \(k_j\) is the smallest integer such that \(\sum_{i = k_{j}}^{k_{j+1}-1} p_i \ge 1/2\). Then, by union bound, the probability of finishing within a given group, assuming we reach it, is at least \(1/2\). Also, we have that \(\sum_{i = k_j}^{k_{j+1}-1} p_i \le 3/2\) since all but the last elements together are less than \(1/2\) and the last element is at most 1. This is a bound on the tolerated infections (as a multiple of \(n\)). Therefore, we can bound the above expectation from above by the geometric process that tolerates \(3n/2\) infections to perform the rounds in a given group and has probability \(1/2\) of finishing within that block. Thus, we can bound above by
\begin{align*}
\le~ &n \sum_{j=1}^\infty \left( \sum_{i=k_j}^{i=k_{j+1}-1} p_i \right) \cdot \left(\prod_{\ell=1}^{j-1}\left( 1 - \sum_{i=k_\ell}^{i=k_{\ell+1}-1} p_i \right) \right) \\  \le &n \sum_{j=1}^\infty 3/2 \cdot 1/2^{j-1} = 3n.\qedhere
\end{align*}
\end{proof}

By Markov's inequality, the probability that this strategy takes at most \(6n\) infections is at least constant (\(1/2\) to be precise).
Note that in the case of consistent source behavior, this yields a node sampled uniformly from the set of nodes which are infected as we employ uniform rejection sampling.

\begin{algorithm}[tbhp]
\DontPrintSemicolon
\SetKwFunction{FMain}{UnknownSourceDetection}
\SetKwProg{Fn}{fun}{:}{}
           1. Find $\sqrt{n}$ nodes which are infected using repeated application of \Cref{lem:linear-discovery}.

           2. Now let $a$ be the node among those picked that is infected the earliest.

           3. Recursively, pick the next $a$ as the node that infected the previous $a$.

           4. If there is no neighbor left that gets infected earlier, we must have found the root.
    \caption{The randomized SD algorithm for unknown static graphs and consistent source behavior.}
    \label{alg:aY1}
\end{algorithm}

Notice that this algorithm always finds the source and has a constant probability of terminating within \(2 \sqrt n\) rounds. Modifying step 3 such that it aborts after \(\sqrt{n}\) rounds yields an algorithm that always terminates within \(2\sqrt n\) rounds and has a constant probability of finding the source.

With this tool in hand, we can now give \cref{alg:aY1}.

\begin{appendixtheoremrep}
\Cref{alg:aY1} solves the SD problem with consistent source behavior on unknown static graphs within \(O(n \sqrt n)\) infections with constant probability.
\label{thm:aY1}
\end{appendixtheoremrep}

\begin{proofsketch}
We first show that, picking a node close enough to the source in Step 1, results in finding the source in Steps 3 and 4 in \(\BigO(n \sqrt n )\) rounds by examining the tree of the infection behavior.
We then bound the probability of picking a node close enough to the source by showing that not too many nodes can be far away from the source (since being far away requires a number of intermediate nodes).
\end{proofsketch}

\begin{appendixproof}
The proof consists of two parts: first, we show that if we pick a good node as \(a\) in step 2, then steps 3 and 4 take \(O(\sqrt{n})\) rounds. Second, we show that picking a good node is sufficiently likely.

Let \(T\) be the tree of the infection behavior.
Observe that when node-labeled with the infection times, \(T\) is a min-heap.
For a node, call the distance from the root its \emph{infection distance}.
For part one, assume that \(a\) has an infection distance at most \(\sqrt{n}\).
Clearly, then step 3 recurses at most \(\sqrt n\) times.

For part two, set \(t\) to be the smallest time step at which any node \(v\) of infection distance greater than \(\sqrt n\)  gets infected, and let \(T'\) be the induced subgraph on \(T\) of the nodes that are infected before \(t\). Since \(T\) is a min-heap with regard to infection times, \(T'\) is also a tree (and not a forest).
By definition, there are at least \(\sqrt{n}\) nodes in \(T'\) (examine the path on which \(v\) lays).
Also observe that if the algorithm picks a node from \(T'\) in step 1, then it also picks a node from \(T'\) in step 2, giving us the result for the first part.
Lastly, let us examine how likely it is that a node from \(T'\) is picked in step 1.
For that, see that the probability of the complementary event is at most
$\left((n - \sqrt{n})/n \right)^{\sqrt{n}}$,
since there are at most \(n - \sqrt n\) nodes not in \(T'\) and we pick \(\sqrt n\) nodes. It’s well known that this term approaches \(1/e\) as \(n\)  approaches \(\infty\).
\end{appendixproof}

\begin{appendixtheoremrep}
Let \(0 < \varepsilon < 1 \) be arbitrarily small.
For any algorithm solving the SD problem on trees with consistent source behavior, unknown static graph and watching a single node with success probability \(p >0\), there is an instance for every number of nodes \(n\) such that the price of detection under that algorithm is at least \((1/2) \cdot \sqrt{\ln\left((1-p/(\funnyconstant))^{-2}\right)} n \sqrt{n}\).
\label{thm:aY1-lb}
\end{appendixtheoremrep}

This also gives us a bound on the trade-off between the success probability and the price of detection.
For a more intuitive bound, obverse that by simple analytic tools
we can deduce that even a price of detection of \(p n\sqrt{n}\) is unachievable for any algorithm with success probability \(p<0.8724\).
\recheckNico{I am confused about what you really want to say here. Doesn't \(\sqrt{\ln\left(\frac{1}{(1-p/(\funnyconstant))^2}\right)} > p\) for \(p \in (0,1)\) give you that reducing \(p\) gives you a better than proportional payoff? Or not even that.}
\begin{proofsketch}
For \(n \in \N^+\), we consider the path on \(n\) nodes labeled \(\edgeLabel{i,i+1} \coloneqq i\) for \(i \in [n-1]\).
We assume that there is an algorithm \(A\) that wins the game on these graphs within \(c \sqrt{n}\) rounds with probability at least \(p\).
From \(A\) we construct algorithm \(B\) with strictly stronger capabilities: in each round, \(B\) watches the same nodes as \(A\) and also all nodes that \(A\) has observed to become infected in previous rounds.
We argue that for \(A\) to win the game, \(B\) has to win the game, which happens if \(A\) guesses a node which is close to the source.
We bound the probability of that event to obtain the result.
\end{proofsketch}

\begin{appendixproof}
First, we describe an infinite family of graphs, then assume there in an algorithm winning the game on all of these instances \recheckNico{what does it mean to win a game on a family of graphs?} in \(o(n \sqrt{n})\) infections with probability at least \(p\). We then derive an algorithm that must solve the problem strictly faster and show that, for all small enough \(c\) (which we will give later), there are problem instances where this improved algorithm has probability less than \(p\) to require less than \(c\sqrt{n}\) rounds.

First of all, let \(n \in \N\). Then set \(P_n\) the path with \(n\) nodes and let the time labeling be \(\edgeLabel{i,i+1} = i\) for \(i \in [n-1]\). We shuffle the node indices uniformly at random before providing them to the Discoverer so that they do not reveal the graph For simplicity we will refer to the ordered indices in the rest of the proof. Now let \(A\) be any algorithm that solves the SD problem on the set \(\{P_n\}_{n \in \N}\). w.l.o.g., we may assume that there is a round in which \(A\) watches the root node (if there is no such round, we can modify \(A\) such that in a single extra round before submitting its answer, \(A\) watches the node it will claim is the source).\recheckNico{Maybe we should mention that we shuffle the nodes so that the indices of them do not give anything away about the position? That is somewhat obvious but it might help to mention it explicitly.}

From this, we construct an algorithm \(B\) which has strictly more operations available to it. Concretely, \(B\) may watch an arbitrary number of nodes. In any round, \(B\) chooses to watch the same node \(A\) watches, as well as all nodes that \(A\) has observed being infected. \(B\) then terminates once it has watched the source (the algorithm can tell because the node gets infected but not via an edge). Since we assumed \(A\) watches the node it believes to be the source before terminating, if \(A\) wins the SD game, so does \(B\). Also observe that \(B\) requires at most as many rounds as \(A\).

Observe that, since the players do not know the underlying graph, \(A\) can either select a node about which it has information or one about which it does not have information.
If \(A\) has information about a node, that means it has been picked before or is the source of an infection detected at another node.
Since the behavior of the source is always the same, the first of these two options yields no extra information, and thus we can assume \(A\) never makes such a choice.

By assumption, \(A\) requires less than \(c \sqrt{n}\) rounds (with probability at least \(p\) and for large enough \(n\)). 
We may also assume \(A\) spends at least one round watching the source once it has found it.
Thus, in order for \(B\) to have picked the source at the end, \(A\) must have picked a node with distance at most \(c \sqrt{n }\) from the source at some time \recheckNico{That sentence is confusing. A must have actually picked the source itself, so where does the distance come from?}.
In order for that to happen, \(A\)  must have picked a node about which it had no information and which has distance at most \(c \sqrt{n}\) from the source (by counting the rounds needed to trace an infected node to the source).

Since the nodes about which \(A\) has no information were shuffled uniformly at random by the adversary, we can assume that \(A\) selects them uniformly at random without replacement. Now, since \(\funnyconstant > 2\), the probability of picking a node within distance \(c \sqrt{n}\) to the source is smaller than
$$
1-\left( \frac{n-(\funnyconstant) c\sqrt{n}}{n}\right)^{c \sqrt{n}}.
$$
If we now pick  \(c< (1/2) \cdot \sqrt{\ln\left(\frac{1}{(1-p/(\funnyconstant))^2}\right)}\) and \(n\) large enough such that \(c\sqrt{n} \ge 1\), we have that this success probability is less than p. This follows since, by choice, \(-2c^2 > \ln(1-p/(\funnyconstant))\) \todo{Are you sure that the 2 is on the correct side here?} and thus \(e^{-2c^2} = \left(e^{-2c} \right)^{c} \ge 1-p/(\funnyconstant)\). As \(\left(1-\frac{(\funnyconstant) c}{\sqrt{n}}\right)^{\sqrt{n}}\) approaches \(e^{-(\funnyconstant) c}\) and \(\frac{n-(\funnyconstant) c\sqrt{n}}{n} = 1- \frac{(\funnyconstant) c}{\sqrt{n}}\), we have the desired result \(1-\left( \frac{n-(\funnyconstant) c\sqrt{n}}{n}\right)^{c \sqrt{n}} < p\) for large enough \(n\). This contradicts the assumption that \(A\) had success probability at least \(p\), proving that no such algorithm may exist.
\end{appendixproof}

\begin{toappendix}
Note, the algorithm \(B\) can simply be extended to support an algorithm \(A\) which may watch multiple nodes at the same time. Now, if \(A\) may watch \(k\) nodes, then \(B\) has a success probability of at most
$
1-\left((n-2c\sqrt{n})/n\right)^{kc \sqrt{n}}.
$
\end{toappendix}

\section{When Knowing the Static Graph Helps} \label{sec:known}
\label{known}
One would expect that knowing the static graph where the infections take place puts the Discoverer at a significant advantage.
Surprisingly, we prove that generally, this is not the case.
Motivated by this negative result, we investigate which knowledge about the static graph has value for the Discoverer.
We show that if the static graph has treewidth \(\tw{}\), the price of detection is in \(\BigO(\tw\cdot n \log n)\).
Finally, we argue that the central property that allows this is that the graph then recursively has small separators if it has small treewidth.

\begin{appendixtheoremrep}
Let \(A\) be a Discoverer algorithm for the SD game with consistent source behavior on known static graphs. Then there is an algorithm \(A^u\) for the SD game with consistent source behavior on unknown static graphs such that if \(A\) wins the game within \(O(f(n))\) infections with probability \(g(n)\), then so does \(A^u\) for its game.
\label{thm:consistent-known-to-unknown}
\end{appendixtheoremrep}

\begin{toappendix}
\begin{figure*}[tbph]
\centering
\includegraphics[width=0.8\linewidth,page=2]{figures/ipe/source-detection-game.pdf}
\caption{Deriving an algorithm \(A^u\) for unknown static graphs from an algorithm \(A\) for known static graphs. \(A^u\) acts as the adversary to \(A\) and uses its behavior to determine its behavior towards its own Adversary. \label{alg:known-to-unknown}}
\end{figure*}
\end{toappendix}

The main idea of this reduction is to use the algorithm \(A\) as a subroutine by reporting \({V \choose 2}\) as the edge set to it.

\begin{appendixproof}
We construct \(A^u\) from \(A\) by simulating the Adversary for \(A\) and using \(A\)'s responses to talk to \(A^u\)'s Adversary.
Let \(V, \Tmax, \iphase\) be the parameters that \(A^u\) receives from the Adversary in the first step of the game.
In the beginning, we pick \(E' \coloneqq \binom{V}{2}\), that is, we report the complete graph to \(A\).
Similarly, we pick \(\Tmax' \coloneqq \Tmax + \iphase + 1\).
We then report these modified parameters to \(A\).
In the rounds phase of the game, we simply relay watching queries and responses between the Adversary and \(A\).
Finally, we simply take \(A\)'s guess of the source and use it as \(A^u\)'s guess.
See \Cref{alg:known-to-unknown} for a diagram of this construction.

Observe that there is a one-to-one correspondence between the infections in the two games.
Thus, if \(A\) finishes within \(O(f(|V|))\) infections for some function \(f \colon \N \to \N^+\), so must \(A^u\).

Define
\begin{align*}
\edgeLabelOp' \coloneqq u,v \mapsto \begin{cases}
\edgeLabel{u,v}, &\text{if } uv \in E, \\
\Tmax', &\text{otherwise.}
\end{cases}
\end{align*}
Consider \Cref{fig:consistent-known-to-unknown} for an illustration of this construction.
Observe that if \(E, \edgeLabelOp, s', t_0\) is consistent with all \(I_i\), then so is \(E', \edgeLabelOp, s', t_0\).
Thus, if the Adversary wins against \(A^u\), the simulated Adversary wins against \(A\).
By contraposition, we may assume that if \(A\) wins against the simulated Adversary, so does \(A^u\) win against its Adversary, yielding the desired translation of success probabilities from \(A\) to \(A^u\).
\end{appendixproof}

\begin{figure}[t]
\centering
\includegraphics[width=5cm]{./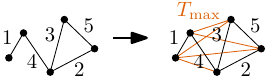}
\caption{\label{fig:consistent-known-to-unknown}The construction from the proof of \Cref{thm:consistent-known-to-unknown}. Discoverer for known graphs \(A\) is used on the completed graph on the right to win the game in the setting of unknown graphs.}
\end{figure}\recheckNico{It should be \(T_{max}'\) right?}

After this negative result, we explore when knowing the static graph does help.
In particular, this also gives us a structural insight into which kinds of static graphs are easy to discover the time labels on.

\begin{appendixtheoremrep}
There is an algorithm that wins the SD game with consistent source behavior on known graphs while only tolerating \(\BigO(\tw \cdot n \log n)\) infections in expectation, where \(\tw\) is the treewidth of the graph.
\label{thm:aX1-tw}
\end{appendixtheoremrep}


\begin{figure}[t]
\centering
\includegraphics[width=4cm]{./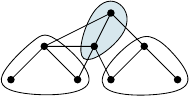}
\caption{\label{fig:aX1-tw}A graph with a balanced separator of size 2 (blue). An infection chain that includes nodes from both the left and right partitions must pass through the separator. Thus, in these cases, watching the separator reveals which of the partitions the source is in.}
\end{figure}

This result is achieved by \Cref{alg:aX1-tw}.
It crucially depends on the existence of small separators.
Specifically, in a static graph \(G\) with treewidth \(\tw{}\), there is a \(1/2\)-balanced separator \(S\) (i.e.,~every connected component in \(G-S\) has size at most \(1/2 \nodesetsize{G}\)) of size at most \(\tw + 1\) \cite{cygan2015parameterized}.
See \Cref{fig:aX1-tw} for an illustration.
This result also holds if the nodes are weighted, and we ask for a separator to ensure each component has at most half of the weight.
Therefore, if we assign either \(0\) or \(1\) as weights, we may pick which nodes we want to evenly distribute on the sides of the separator.

\begin{algorithm}[tbhp]
    \DontPrintSemicolon
    \SetKwFunction{FMain}{TreewidthSourceDetection}
    \SetKwProg{Fn}{fun}{:}{}
       Maintain a set of nodes that could still be the source.

       Until the watched node is the source, repeat:

       1. Compute a balanced \( (\tw+1)\)-size separator of the candidate nodes.

       2. Watch the separator nodes for one round each.

       3. If none of them get infected, watch one of the candidate nodes uniformly at random until one is infected. \label{alg:aX1-tw:random-search}

       4. Update the candidate nodes to only include nodes on the component induced by removing the separator that first infected a node in the separator.\todo{rephrase that sentence. It does not account for case 3 yet.}

\caption{The SD algorithm for static graphs with bounded treewidth and consistent source behavior.}
\label{alg:aX1-tw}
\end{algorithm}

\begin{proofsketch}
The claimed properties of the algorithm follow, since the following three properties hold after each iteration:
\begin{enumerate}[noitemsep]
\item We correctly track the candidate nodes (i.e.,~the source must always be one of the remaining candidates).
\item After each iteration of the main loop of the algorithm, the number of nodes halves.
\item With constant probability, we only tolerate a linear number of infections until detection.\qedhere
\end{enumerate}
\end{proofsketch}
\todo{you say that they all hold after each iteration, but they are not really phrased like that.}

\begin{appendixproof}
The claimed properties of the algorithm follow, since the following three properties hold after each iteration of the main loop:
\begin{enumerate}
\item We correctly track the candidate nodes (i.e.,~the source must always be one of the remaining candidates).
\item After each iteration of the main loop of the algorithm, the number of nodes halves.
\item With constant probability, we only tolerate a linear number of infections until detection.
\end{enumerate}\todo{you say that they all hold after each iteration, but they are not really phrased like that.}

The first property holds inductively.
At the beginning, clearly, the source is one of the candidates since all nodes are candidates.
Then, after each loop, exactly one of three things must have happened: (a) the source was one of the separator nodes; (b) the source is not part of the separator but infected at least one node in the separator; or (c) the source infected no node in the separator, and we found an infected node via the search in step 3.
In case (a), we will have found the source, and the Discoverer wins the game.
In case (b), the source must be in the component of the separated graph from which the infection first entered the separator.
In case (c), the source must be in the same component as the infected node, since if it were not, the infection chain must have included at least one node of the separator.
Thus, the source is always one of the candidate nodes.

The second property simply follows since the separator balances the candidate nodes.
That is, every component induced by removing the separator has at most half of the candidate nodes.
As we have seen, after each round, we select one of these components to restrict the candidate nodes to.

The third property follows from applying \Cref{lem:linear-discovery} to each iteration of the loop, then summing the costs, and finally applying the linearity of expectation.

From these three properties, we see that the main loop runs at most \(\log n\) times and that each of its iterations incurs \(O(\tw \cdot n)\) infections in expectation.
\end{appendixproof}

Intuitively, this dependence on the existence of separators makes sense.
Looking for the source means that in each round, we have to decide where to look next based on the information gleamed so far.
Together, \Cref{thm:aY1} and \Cref{thm:aY1-lb} show that in the case of unknown graphs, we cannot do much better than testing a few nodes and then retracing their infection path node-by-node.
In some cases, we can do better by performing a binary search for the source, but in order to decide which half of our candidate nodes infections stem from, we have to separate them.
Such separators may not be too large since we need to spend rounds on watching them, thus this technique is only useful if small separators exist recursively.
This explains the connection between small treewidth and a smaller price of detection.

\section{The Special Case of Trees} \label{sec:trees}
\label{trees}
First, observe that our algorithm exploiting the treewidth of a known graph (see \Cref{thm:aX1-tw}) directly translates to trees.

\begin{appendixcorollaryrep}
There is an algorithm that wins the SD game with consistent source behavior on known trees while only tolerating \(\BigO(n \log n)\) infections with constant probability.
\label{cor:aX1-tree}
\end{appendixcorollaryrep}

\begin{appendixproof}
This follows from the \Cref{thm:aX1-tw} since trees have treewidth 1.
\end{appendixproof}

For trees, we also show a strong matching lower bound.

\begin{theorem}
There is an infinite family of trees \(\{P_i\}_{i \in \N}\) such that, for any algorithm winning the SD game with consistent source behavior and known static graph on tree graphs with success probability \(p >0\),  the price of detection under that algorithm is asymptotically equivalent to \(n \log n\).
\label{thm:aX1-lb-tree}
\end{theorem}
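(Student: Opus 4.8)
The plan is to establish the lower bound; the matching $\BigO(n\log n)$ upper bound on this family is \Cref{cor:aX1-tree}. As the hard family I would take the paths from the proof of \Cref{thm:aY1-lb}: let $P_n$ be the path on $n$ nodes with labeling $\edgeLabel{i,i+1}=i$, the difference being that the Discoverer is now told this labeled graph, so the only unknown is the source position $s$ (together with its start time). The adversary always seeds node $s$ at time $t_0=0$ with a large infection length $\iphase$, so that in every round the infection wave travels rightward along the whole path and hence infects at least $n-s$ nodes. The structural observation driving the proof is that, since the graph and all labels are known, watching a single node $m$ yields essentially a ternary comparison against $s$: if $s<m$ the Discoverer sees $m$ infected from its left neighbour at the fixed time $m-1$; if $s\in\{m,m+1\}$ it identifies the source and wins; and if $s>m+1$ it sees nothing. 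By consistency, re-watching a node gives nothing new, so without loss of generality the Discoverer watches distinct nodes and its strategy is a decision tree of constant fan-out whose depth equals the number of rounds it uses.

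The core of the proof is then a comparison-search round lower bound combined with the per-round cost. I would have the adversary draw $s$ uniformly at random from the prefix $\{1,\dots,k\}$ with $k\coloneqq\lfloor n/\log_2 n\rfloor$: this keeps the per-round cost at least $n-k=(1-o(1))\,n$, while leaving $k$ candidate positions. For the round count, fix any algorithm $A$ winning with probability at least $p$ on every instance. By an averaging argument over $A$'s random coins there is a coin fixing $\omega$ under which $A$ is deterministic and wins on a set $W_\omega$ of at least $pk$ of these sources; its decision tree then has a distinct correct leaf for each $s\in W_\omega$, and since each internal node has at most two internal children (the ``$s<m$'' and ``$s>m+1$'' branches) and $O(1)$ leaf children, the number of leaves at depth at most $D$ is $O(2^D)$. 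Choosing $D=\log_2\!\bigl(\Theta(pk)\bigr)$ therefore leaves a constant fraction of $W_\omega$ at depth more than $D=(1-o(1))\log_2 n$, so on those sources $A$ uses more than $(1-o(1))\log_2 n$ rounds, each costing at least $(1-o(1))\,n$ infections; this yields a price of detection of at least $(1-o(1))\,n\log_2 n$, which together with \Cref{cor:aX1-tree} is the claimed asymptotic equivalence.

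The step needing the most care is handling randomized algorithms that may fail on some instances, where the coin averaging and the decision-tree counting must be applied simultaneously rather than one after the other. Concretely, for every coin fixing $\omega$ the counting bound gives $\ProbOp_{s}\!\left[\text{$A$ wins on $s$ in $>D$ rounds}\mid\omega\right]\ge\ProbOp_{s}\!\left[\text{$A$ wins}\mid\omega\right]-O(2^D)/k$ (trivially when the right side is negative), so taking $\EOp_\omega$ and using $\EOp_\omega\!\left[\ProbOp_s[\text{win}\mid\omega]\right]\ge p$ gives a lower bound of $p-p/2=p/2$, which exhibits a single source $s^\star$ on which $A$'s price exceeds $(1-o(1))\,n\log_2 n$ with probability at least $p/2>0$; this mirrors the bookkeeping in the proof of \Cref{thm:aY1-lb}. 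A second delicate point is the leading constant: getting ``asymptotically equivalent to $n\log n$'' rather than merely $\Theta(n\log n)$ forces the prefix length to satisfy $k=o(n)$ yet $\log_2 k=(1-o(1))\log_2 n$ (the choice $k=\lfloor n/\log_2 n\rfloor$ being convenient), and relies on the upper bound of \Cref{cor:aX1-tree} being realized on the known path as a plain binary search, where one watch resolves the ternary comparison and so $\lceil\log_2 n\rceil$ rounds of at most $n$ infections each suffice.
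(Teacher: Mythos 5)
Your proof is correct and rests on the same core argument as the paper's: model the (coin-fixed) Discoverer as a decision tree of constant fan-out, observe that winning on a $p$-fraction of candidate sources forces $\Omega(2^{\#\text{rounds}})$ distinct leaves and hence $(1-o(1))\log_2 n$ rounds, and multiply by a per-round cost of $(1-o(1))n$ infections. The difference is in the adversary's construction. The paper labels the path with $\edgeLabel{i,i+1}=n-i$ to the left of the source and $i$ to the right, so the infection spreads in \emph{both} directions and every one of the $n$ nodes is infected in every round; the source can then be uniform over all $n$ positions, the per-round cost is exactly $n$, and the decision tree branches on the direction (left/right neighbor) from which the watched node is infected. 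You instead use monotone labels, so the wave travels only rightward, and you compensate by restricting the source to a prefix of length $k=\lfloor n/\log_2 n\rfloor$ so that each round still costs $(1-o(1))n$ while $\log_2 k=(1-o(1))\log_2 n$ — a genuine extra idea that you identify and handle correctly (your branching is instead ``infected from the left'' versus ``not infected''). Two things you do more carefully than the paper: the simultaneous averaging over the algorithm's coins and the random source (the paper fixes an execution tree and argues somewhat informally that $2^r/n\ge p$), and the observation that asymptotic \emph{equivalence} to $n\log n$ also needs the upper bound on this family to be $(1+o(1))n\log_2 n$, which the paper's proof glosses over. The paper's construction buys a cleaner cost accounting (no prefix trick needed); yours buys a cleaner information structure (a pure ternary comparison against $s$, making the decision-tree bound transparent).
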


This implies that there is no algorithm winning this game setup in \(o(n \log n)\) with constant probability.
Again, the proof can be adjusted to instead hold for algorithms that always win the game and have at least a constant probability to finish with less than \(o(n \log n)\) infections.

Surprisingly, the concrete success probability does not play a role for the result (as long as it is positive and constant). Meaning that, for large enough graphs, we cannot trade a lower success probability for a multiplicative improvement over the \(n \log n\) number of tolerated infections.
Compare that to the weaker lower bound from  \Cref{thm:aY1-lb}, where this door is left open (though there is no known algorithm to exploit it).

\begin{proof}
First of all, let \(n \in \N\). Then set \(P_n\) as the path with \(n\) nodes. Set \(\iphase = n\). Now let \(A\) be any algorithm that wins the SD game with consistent source behavior on known trees in \(o(n \log n)\) rounds with a constant probability \(p\). The adversary picks the source node \(s\) uniformly at random. Then, let the adversary set the edge labels for all \(i < s\) as \(\edgeLabel{i,i+1} = n - i\) and for all \(i \ge s\) as \(\edgeLabel{i,i+1} = i\).

Note that we require the algorithm to solve the problem on any tree and with any source with constant probability. We now model all possible randomized algorithms as a distribution over what we call execution trees. These execution trees encode the reactions the algorithm may have to all possible responses by the adversary.  An instance of the game is then equivalent to picking an execution tree at the beginning and evaluating the responses of the adversary against the execution tree, yielding a path through the execution tree.

The behavior of the Discoverer can be described as the series of nodes it picks to watch. We may assume w.l.o.g. that the last node the Discoverer picks is the one it believes is the root (this is similar to the proof of \Cref{thm:aY1-lb}). Now, by the definition of the SD game, in each round, the Discoverer submits a node to watch and receives information about when and from which direction it was infected (if at all). By definition of our instances, every node is infected in every round. Thus, the only information the Discoverer receives is from which direction a node was infected and at which time step.

The edge labels only encode whether the edge is on the left or right side of the source, which the Discoverer also learns because it learns the direction along which the infection travels over the edge.
Thus, we may disregard the effect that this knowledge has on the behavior of the Discoverer.

We construct this (binary) tree as follows. As the Discoverer algorithm is randomized, which node it picks in the next rounds, is dependent on previous information and random choices. We can thus assume the Discoverer makes all its random choices at the start of the execution and then builds a tree of possible outcomes. Each tree node is labeled with the node the Discoverer will watch in a certain round. Then, the left and right children are the choices, the Discoverer algorithm will make if the watched node is infected by its left and right neighbor respectively.

Thus, after picking an execution tree, the behavior of the Discoverer algorithm is deterministic. To complete the result, we show that for any possible execution tree with height in \(o(\log n)\), there are a sufficient amount of source nodes for which the algorithm would not win.
Clearly, an execution of the game is now associated with the execution tree the Discoverer algorithm picks at the start and the path through that tree. Since we assumed the algorithm always watches the source node in at least one round, an execution is only winning if the path taken includes the source nodes.

Assume the algorithm terminates after \(r\) rounds and let \(c \in \R^+\) such that \(r = c \log n\).
We may assume that the execution path has length at most \(r\).
Since the Discoverer can now only win if the source is in the \(r\) first layers of the execution tree (in which there are \(2^r\) nodes) and the source was picked uniformly at random, we have that \(2^r / n \ge p\).
Thus, \(2^{c \log n} / n \ge p\), which rearranges to \(n^c/n \ge p\).
Applying the logarithm and rearranging leaves us with
$c \ge 1 + \log p / \log n$.

The right-hand side approaches \(1\) as \(n\) approaches \(\infty\).
Thus, \(r\) approaches \(\log n\) and since on our family of graphs, all nodes are infected in each round, the number of tolerated infections is asymptotically equivalent to \(n \log n\).
\end{proof}

\section{Dynamic Infection Behavior} \label{sec:dyn}
\label{dyn}
In this variation of the game, the Adversary may change the time at which the source is seed-infected between rounds, but not the source itself.
The Discoverer still learns when the watched node becomes infected each round.
Note that the Adversary does not learn which nodes the Discoverer watches, and so its behavior may not depend on the choices of the Discoverer.
We will see that for this variation, the simple deterministic brute-force algorithm is asymptotically optimal.

\begin{appendixtheoremrep}
The brute-force Discoverer wins any instance of the SD game on unknown graphs and with obliviously dynamic source behavior that tolerates at most \(n^2\) infections.
\label{thm:dY1}
\end{appendixtheoremrep}

This result is not randomized; the algorithm is always successful.
As this variation is the most general, the result directly translates to all the other game variations.
The result is indeed tight, as we see in the following theorem.
We cannot even improve upon the quadratic cost of detection if we only ask for an algorithm with a constant probability of success.

\begin{appendixproof}
The algorithm simply spends one round watching each node.
Since watching the source reveals it, we definitely find the source.
In each round, at most \(n\) nodes become infected, and we require \(n\) rounds.
Thus, we tolerate at most \(n^2\) infections.
\end{appendixproof}

\begin{theorem}
For any algorithm solving the SD problem with obliviously dynamic behavior, known static graph, and watching a single node with success probability \(p >0\), there is an infinite family of graphs such that the price of detection under that algorithm is at least \(\min(1/3, p/2) n^2\).
\label{thm:dX1-lb}
\end{theorem}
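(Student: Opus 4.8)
The plan is to exhibit, for each large $n$, a known static graph together with an obliviously dynamic Adversary strategy that simultaneously forces (i) every round to incur $\Theta(n)$ infections no matter what happens, and (ii) the Discoverer's observations to carry no information about the source's identity until the round in which it happens to watch one of only $O(1)$ "revealing'' nodes, which sit among $\Theta(n)$ indistinguishable candidates. Properties (i)+(ii) then force $\Omega(n)$ rounds to reach success probability $p$, hence $\Omega(pn^2)$ infections.

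Concretely, I would take $G_n$ to consist of a cycle $u_1,\dots,u_L$ with consecutive-integer edge labels, say $\lambda(u_iu_{i+1})=N+2i-1$ (cyclically, $u_{L+1}=u_1$, seam label $N+2L-1$, with $N$ a large base), together with $m$ "candidate source'' nodes $s_1,\dots,s_m$, each joined to every cycle node by an edge with $\lambda(s_ju_i)=N+2i-2$; set $L=m=\lfloor n/2\rfloor$ and the infection duration $\delta=1$. The Adversary picks the source $s_j$ uniformly at random, and in round $i$ seed-infects it at time $N+2g_i-3$, where $g_i$ is drawn uniformly and independently from $\{1,\dots,\lceil L/2\rceil\}$ — this is exactly the freedom the obliviously dynamic model grants. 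One checks that with these spaced labels and $\delta=1$ the source directly infects exactly one cycle node, $u_{g_i}$, whose reported infector is unambiguously $s_j$; and that from $u_{g_i}$ the infection runs once around the cycle, infecting $u_{g_i},u_{g_i+1},\dots,u_L,u_1$ (it cannot cross back past $u_1$), i.e.\ at least $L/2=\Theta(n)$ nodes, always with the same infection times and infectors on the part that is reached, independent of $g_i$. Hence every round costs $\Theta(n)$ infections, and the only two kinds of "revealing'' watches in round $i$ are: watching the source $s_j$ itself, or watching the current entry node $u_{g_i}$.

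For the information argument, fix any randomised Discoverer with success probability $\ge p$ and condition on the event that up to round $i$ it has watched neither $s_j$ nor any $u_{g_\ell}$. Under this conditioning every observation it has made is independent of $j$: a watched non-entry cycle node either is "not infected'' or reports a fixed cascade infector at a fixed time (so it leaks only a little information about the current $g_i$, which is re-randomised each round and hence useless), and a watched non-source candidate is never infected at all. Consequently the node the Discoverer chooses to watch in round $i$ is independent of $j$ and of $g_i$, so it coincides with $s_j$ with probability $\le 1/m$ and with $u_{g_i}$ with probability $\le 2/L$; a union bound over $r$ rounds, together with the $\le r/m$ chance of a blind correct guess, gives success probability $O(r/n)$. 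To reach probability $p$ the Discoverer therefore needs $r=\Omega(pn)$ rounds, each costing $\Theta(n)$ infections; averaging over the Adversary's coins selects a concrete instance on which the price of detection is $\Omega(pn^2)$. Tracking the constants, and using the deterministic $n^2$ upper bound of \Cref{thm:dY1} to cap the regime where $p$ is close to $1$, yields the stated $\min(1/3,p/2)\,n^2$.

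The main obstacle is making (i) and (ii) coexist. A cascade that is cheap to observe — the same infector and time at every fixed node, for every choice of source — wants to be triggered "at a fixed place'', but a fixed trigger node would see, and hence reveal, the source, letting the Discoverer win in round $1$; conversely, enlarging the cascade to keep rounds expensive tends to introduce a position/direction signal on which the Discoverer could binary-search in $O(\log n)$ rounds (this is exactly why the analogous consistent-behaviour bound is only $\widetilde\Omega(n^{3/2})$, and it is where the obliviously dynamic power must be used). The resolution is the rotating cascade above: the entry point is moved each round by the (source-independent) seed time, so its location carries no information about $s_j$, while the consecutive-integer cycle labels let the infection propagate all the way around from wherever it enters. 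The delicate verifications are that exactly one cycle node per round is revealing (this is where the label spacing and $\delta=1$ are used), that the reached portion of the cascade looks identical across rounds, and that $\delta=1$ is large enough for propagation yet small enough to keep the source from lighting up several entry nodes at once.
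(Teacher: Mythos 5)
Your high-level plan is sound and genuinely different from the paper's: the paper instead decomposes $K_{n-1}$ into $(n-1)/2$ edge-disjoint Hamiltonian paths all rooted at the source, labels edge $e$ by $p(e)n+q(e)$, and rotates which path carries the infection each round, so that observations on one path constrain the node positions on the other paths only negligibly. Your ``rotating entry point on a labelled cycle plus a pool of automorphism-equivalent candidate sources'' is a legitimate alternative realization of the same idea (make every round cost $\Theta(n)$ while keeping the revealing nodes rare and the candidates indistinguishable). However, the concrete verification you yourself flag as the delicate part is wrong. Under the paper's SIR timing (a node infected at time $t$ is infectious from $t+1$ until $t+\delta$), with $\delta=1$ a node is infectious for exactly one time step, while your consecutive cycle labels $\lambda(u_iu_{i+1})=N+2i-1$ differ by $2$. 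Tracing the cascade: $s_j$ seeded at $N+2g-3$ is infectious only at $N+2g-2$, so it infects $u_g$ at $N+2g-2$; $u_g$ is infectious only at $N+2g-1$ and infects $u_{g+1}$; but $u_{g+1}$ is infectious only at $N+2g$, whereas the next cycle edge has label $N+2g+1$, so the cycle propagation dies after two steps and the infection never ``runs once around the cycle.'' Worse, the spokes at $u_{g+1}$ all carry label $N+2(g+1)-2=N+2g$, which is exactly when $u_{g+1}$ is infectious, so $u_{g+1}$ infects \emph{every} susceptible candidate $s_{j'}$ in that time step. This directly falsifies your claim that ``a watched non-source candidate is never infected at all,'' and with it the conditional-independence statement as written: watching $s_{j'}$ reveals that $s_{j'}$ is not the source (it is infected via an edge, while the true source is seed-infected), which is information about $j$.

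The construction happens to be repairable rather than fatally broken: the $m-1$ infected candidates give you $\Theta(n)$ infections per round even though the cycle argument fails, and eliminating one candidate per watched round still only yields success probability $O(r/n)$ after $r$ rounds, so an $\Omega(pn)$ round bound (hence $\Omega(pn^2)$ infections) survives. But that is a different proof from the one you wrote: the per-round cost comes from the candidate pool, not the cycle, and the information accounting must handle the eliminated candidates and the leaked value of $g_i$ explicitly. Alternatively you could fix the dynamics (e.g., take $\delta=2$, or make the cycle labels genuinely consecutive and re-space the spokes), but then you must recheck that the source still lights up exactly one entry node and that downstream cycle nodes do not create additional source-revealing observations. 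As submitted, the step ``one checks that \dots the infection runs once around the cycle'' does not check out, so the proof is not correct as written.
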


\begin{figure}[t]
\centering
\includegraphics[width=3cm]{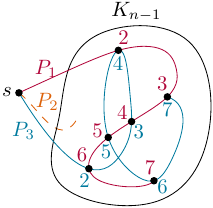}
\caption{\label{fig:dX1-lb}The construction from the proof of \Cref{thm:dX1-lb}. The source is \(s\)  and \(P_1, P_2\) and \(P_3\) are Hamiltonian paths. For simplicity, \(P_2\) is omitted. Nodes are labeled by their position on \(P_1\) and \(P_3\).}
\end{figure}

\begin{proof}
Let \(n \in 2\N+1\).
We construct \(G\) as follows: take \((n-1)/2\) edge-disjoint Hamiltonian paths \(P_1, \dots, P_{(n-1)/2}\) on the \(K_{n-1}\) (which exist by \cite{axiotis_approx}), add a node \(s\) and connect \(s\) to the first node on each of the Hamiltonian paths.
From now on, we interpret each \(P_i\) as starting at \(s\).
Order the nodes and the Hamiltonian paths uniformly at random.
Note, these cover all edges in \(\G\).
Write \(p \colon \edgeset{\G} \to [n/2]\) be the function that maps an edge to the Hamiltonian path it belongs to and write \(q \colon \edgeset{\G} \to [n]\) for the function that assigns an edge its index on that path.
Then let the Adversary act according to \(\edgeLabelOp \coloneqq e \mapsto p(e)n + q(e),\) and pick \(\iphase = 1\).
See \Cref{fig:dX1-lb} for an illustration of this construction.
In each round \(i\), the Adversary lets the source infect the whole graph via \(P_{i \modOp (n-1)/2}\).
For the sake of this proof, we may assume the Discoverer knows the scheme according to which these paths are picked (but not in which order the nodes appear on a specific path).
Thus, if the Discoverer knows the label of an edge \(e\), it may deduce which path it belongs to (by calculating \(\lfloor \edgeLabel{e} / n \rfloor\)) and what the index of the edge has on the path (by calculating \(\edgeLabel{e} \modOp n\)).

We only consider the first \((n-1)/2 - 1\) rounds, since after these, there will have been \(n((n-1)/2-1) \in \Omega(n^2)\) infections.
This is where the \(1/3\) in the minimum in this theorem comes from, since for large enough \(n\), we have \(n((n-1)/2-1) > 1/3 \cdot n^2\).
This case handled \todo{I think it is not really clear what this is an edge case for. Maybe do the case distinction explicitly?}, we assume for the rest of this proof, that during our game there is exactly one infection chain per path.

Intuitively, the Discoverer has information about some of the edges for each path, and the goal of the Discoverer is to find the start of any of the paths.
Formally, we prove that the Discoverer must have observed an infection involving the source (i.e., the first node on all paths), that is, it must have watched the first or second node on a path.
By construction, an observed infection on one path reveals very limited information about the other paths.
An observed infection reveals the positions of the two involved nodes on the current path.
This only tells the Discoverer that they may not be adjacent on any other path.
Within the first \((n-1)/2-1\) rounds, the Discoverer learns precisely \((n-1)/2 - 1\) of these pieces of information, thus for each node, there are at least \(n/2\) positions on the current path that may be consistent with this information.
Also, after \((n-1)/2-1\) rounds, there are at least two nodes on which the Discoverer has no information.
Thus, if the Discoverer never observes the first or second node on a path, the adversary may rearrange the paths in such a way, that one of the unobserved nodes becomes the source (it picks whichever node the Discoverer does not claim is the source).

Since the positions are distributed uniformly at random, no strategy that the Discoverer uses has a probability larger than \(2/n\) \todo{Shouldn't this be a \(4\)? You get a 2 because there are 2 positions and then another 2 because there are at least \(n/2\) valid positions, right?} of picking the first or second node on the current path, thus advancing the requirements outlined above.
Therefore, any algorithm that has success probability at least \(p\), must have at least probability \(p\) of picking the first or second node on a path.
By union bound, the Discoverer must have spent at least \(np/2\) rounds, that is, tolerated \(n^2 p/2\) infections.
\end{proof}

\begin{appendixtheoremrep}
There is a Discoverer algorithm that wins any instance of the SD game on known tree graphs under obliviously dynamic source behavior while watching at most two nodes per round. This algorithm tolerates \(\BigO(n \log n)\) infections in expectation.
\label{thm:dX2}
\end{appendixtheoremrep}

\begin{proofsketch}
    Similar to \Cref{alg:aX1-tw}, we introduce an algorithm that relies on the existence of small spanners.
Concretely, it is a well known result that a tree has a \emph{centroid decomposition} \cite{centroids}.
A centroid is a node such that its removal disconnects the tree into \todo{at least} two components at most half the size of the original tree.
Since, any tree has a centroid and the two components are again trees, we can decompose a tree into a series of these centroids and the respective split trees. The algorithm now proceeds as follows:
\begin{enumerate}[noitemsep]
\item Maintain a subtree of candidate nodes, and always compute a centroid as a balanced separator. Start with the whole graph as the candidates.
\item In each round: watch the current separator and one node in the current subtree picked uniformly at random.
\item If you receive information about the subtree to go into (either because the separator or the other node was infected), do so.
\begin{enumerate}[noitemsep]
\item If the separator is infected, recurse into the side of the separation from which the infection originated.
\item If the randomly picked node is infected but not the separator, recurse into the side of the separation this node is a part of.\qedhere
\end{enumerate}
\end{enumerate}
\end{proofsketch}

\begin{appendixproof}
For the correctness of the algorithm in the proof sketch, note that infection chains always form a directed, connected subtree of the entire graph.
Thus, if the infection stems from one side of the separation, then the source must be part of that subgraph.
Similarly, if the separator is not infected but a node in one of the subtrees is, then the source must be in that subtree.

For the price of discovery, observe that we have to recurse \(\log \nodesetsize{\G}\) times as the centroid is a balanced separator, and thus each time we do so, the subtree containing the remaining candidate nodes halves in size.
Also, by \Cref{lem:linear-discovery}, at most \(\BigO(n)\) nodes are infected in expectation until we observe an infection at the randomly picked node.
Then we may recurse in step 4.

Thus, in expectation, we tolerate \(\BigO(n \log n)\) infections.
\end{appendixproof}

Note that this is the only result in this paper that depends on the Discoverer's ability to watch more than one node at a time.
As we will see in the next theorem, this can only be necessary if the source has obliviously dynamic behavior.
We currently do not know of a lower bound proving that watching more than one node in these scenarios provides an asymptotic advantage.
Intuitively, the above algorithm exploits this capability to mitigate one of the problems with dynamic source behavior: we cannot trace back along an infection chain, as it is unclear if in a given round we see similar behavior to the previous one.
By watching both the next and previous node to be tested, we can extenuate this issue.\todo{I would be more specific about what actually breakes in the previous proof. I.E. That watching a u.a.r. node does not give information anymore about in which component the infection actually started.}

\begin{appendixtheoremrep}
Let \(A\) be a Discoverer algorithm for the SD game with consistent source behavior (with either known or unknown static graph) while watching \(k\) nodes per round.
Then there is a Discoverer algorithm \(A^1\) such that if \(A\) tolerates at most \(x\) successful infections in some instance of the game, \(A^1\) tolerates at most \(k x\) infections.
\label{thm:aXYk-aXY1}
\end{appendixtheoremrep}

This proves that, asymptotically, the ability to watch two nodes does not help the Discoverer in the setting with consistent source behavior.
Note, this theorem and its proof can trivially be extended to randomized bounds.

\begin{appendixproof}
\(A^1\) simply simulates \(A\) in the following fashion: if \(A\) watches \(\ell \le k\) nodes \(v_1, \dots, v_\ell\) in some round \(i\), then \(A^1\) spends \(\ell\) rounds watching \(v_1, \dots, v_\ell\) individually.
Both the correctness and the price of detection bound follow since the behavior of the source is consistent.
Because of the consistency, \(A^1\) gains precisely the same information in the individual rounds for \(v_1, \dots, v_\ell\) as \(A\) does in the one round for \(v_1, \dots, v_\ell\).
Also, because the source behaves consistently, it infects the same number of nodes in each round.
Since \(A^1\) splits each round of \(A\) into at most \(k\) rounds, the number of tolerated infections increases at most by a factor of \(k\).
\end{appendixproof}


\section{Conclusion} \label{sec:conclusion}
\label{conclusion}
Our work provides an extensive theoretical study of SD under the SIR model in temporal graphs.
We give a precise definition of SD in this setting via an interactive two-player game.
Using randomization, we overcome the worst-case behavior and offer efficient algorithms for many settings, such as on trees and general graphs, under consistent and obliviously dynamic source behavior as well as for known and unknown static graphs.
For all but one of the many settings we investigated, we have matching lower bounds proving our algorithms are asymptotically optimal among all algorithms winning the SD game with constant probability, allowing us to characterize the respective difficulty of these settings.

Our work could naturally be extended by studying SD under the susceptible-infected-susceptible model.
As many of our lower bounds rely on the resistance of recovered nodes, seeing which results translate promises to be insightful.
Closing the remaining gap in \Cref{tbl:results} would require investigating SD on unknown graphs with obliviously dynamic source behavior where the underlying static graph is guaranteed to be a tree.
Finally, the most compelling unanswered question is whether allowing the Discoverer to watch multiple nodes in the setting with obliviously dynamic source behavior allows for asymptotically fast algorithms.
We know this is not the case in the setting with consistent source behavior by \Cref{thm:aXYk-aXY1}.
Yet, while we use the additional capability in \Cref{thm:dX2}, we have no proof it is fundamentally necessary to achieve the better running time.

\FloatBarrier
\newpage

\bibliographystyle{named}
\bibliography{papers}

\listoftodos{}

\newpage
\appendix

\end{document}